\newcommand{\listbox}[1]{\lstset{basicstyle=\footnotesize}{\mbox{\lstinline!#1!}}\lstset{basicstyle=\scriptsize,language=JAVA}}
\newenvironment{ttbox}{\begin{alltt}\ttbraces\small\tt}%
                      {\end{alltt}}
\def\ttbraces{\let\.=\nobreak\chardef\{=`\{\chardef\}=`\}\chardef\|=`\\}
\newcommand\ttOmega{\mbox{\( \Omega \)}}
\newcommand\imp\Rightarrow
\newcommand\ttand{\mbox{{$\land$}}}
\newcommand\ttfun{\mbox{{$\Rightarrow$}}}
\newcommand\ttimp{\mbox{{$\longrightarrow$}}}
\newcommand\ttequiv{\mbox{{$\equiv$}}}
\newcommand\ttforall{\mbox{{$\forall$}}}
\newcommand\ttin{\mbox{{$\in$}}}
\newcommand{\evs}{{\mathcal S}}
\newcommand{\ttevs}{\mbox{\(\evs\)}}
\newcommand{\pow}{{\mathcal P}}
\newcommand{\ttgeq}{\mbox{\(\geq\)}}
\newcommand{\ttcup}{\mbox{\(\cup\)}}
\newcommand{\ttcap}{\mbox{\(\cap\)}}
\newcommand{\ttvarnothing}{\mbox{\(\varnothing\)}}
\newcommand{\ttlam}{\mbox{\(\lambda\)}}
\newtheorem{theorem}{Theorem}
\newtheorem{definition}{Definition}[section]
\begin{document}
\title{QKD in Isabelle -- Bayesian Calculation}

\titlerunning{QKD in Isabelle}

\author{Florian Kamm\"uller\\
Middlesex University London\\
\url{f.kammueller@mdx.ac.uk}\\
}

\authorrunning{F. Kamm\"uller}

\maketitle
\begin{abstract}
In this paper, we present a first step towards a formalisation of
the Quantum Key Distribution algorithm in Isabelle. We focus on the 
formalisation of the main probabilistic argument why Bob cannot be certain 
about the key bit sent by Alice before he doesn't have the 
chance to compare the chosen polarization scheme. This means that any 
adversary Eve is in the same position as Bob and cannot be certain
about the transmitted keybits.
We introduce the necessary basic probability theory, present a graphical 
depiction of the protocol steps and their probabilities, and finally how this
is translated into a formal proof of the security argument.
\end{abstract}

\section{Introduction}
In this paper, we present a simple finite foundation for a formalisation of
parts of the Quantum Key Distribution (QKD) algorithm in Isabelle. We focus on the 
formalisation of the probabilistic theory needed to formalise and 
then calculate the probability with which Bob receives the key bit sent 
by Alice. This basic probability argument is important as it can also be
applied to the attacker Eve that has the same chance to learn the transmitted keybits.
Bob, however, in a second step compares the used polarisation schemes
with Alice. Thereby he and Alice are able to retain only key bits that 
have been correctly transmitted. 

To give a brief idea of the protocol that can be used to transmit a sequence of random
bits (that can then be used as a shared One-Time-Pad key giving 100\% security):
\begin{itemize}
\item[(1)] Alice randomly selects a bit 0 or 1 
\item[(2)] Alice randomly chooses diagonal 
(X) or rectilinear (+) polarisation schemes to encode the bit as a photon 
before sending the bit 
\item[(3)] Bob also randomly chooses schemes (X/+) before measuring
the received photon. According to quantum properties, if Alice and Bob chose the same
polarisation schemes the transmission is 100\% correct, if they use different ones
the changes are 50/50.
\end{itemize}
A representative list of possible combinations is given in Figure \ref{fig:qkd}.

\begin{figure}
\begin{center}
Alice sends\\
 \includegraphics[scale=.3]{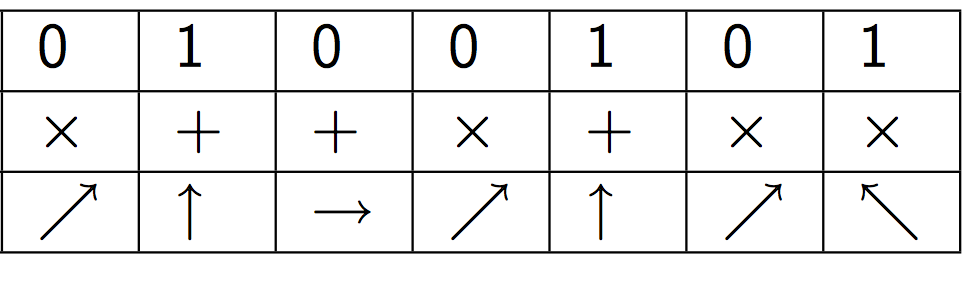}\\
Bob measures\\
 \includegraphics[scale=.3]{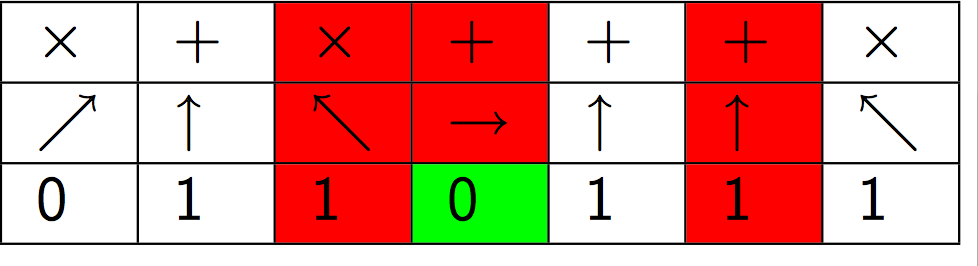}\\
they can use\\
 \includegraphics[scale=.3]{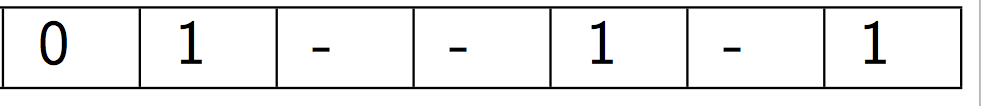}
\end{center}
\caption{QKD example (images from \cite{sin:99})}\label{fig:qkd}
\end{figure}

In this paper, we (1) introduce the necessary probability theory to
show (2) the basic probabilities of the correctness of the key transmission
in the protocol which is a step towards the security analysis and (3)
we develop and illustrate the probability reasoning on finite sets of outcomes
in Isabelle.
Note, that we consider only one bit since the principle is the same in
any number of repetitions necessary to transmit a $n$-bit key.
Also we only consider for a start the first phase of the protocol.

We introduce the necessary basic probability theory, present a graphical 
depiction of the protocol steps and their probabilities, and finally show how this
is translated into a formal proof of the security argument.
The theory and all presented proofs are formalised in Isabelle (see Appendix).

\subsection{Basic Probability}
The very brief introduction to basic probability theory is taken from 
Koller and Friedmann \cite{kf:09} but vastly abbreviated. The 
reader is referred to this excellent textbook for details.

Before defining events $(\evs)$ we first assume a set $\Omega$ of possible
outcomes. Based on that we define a set of {\it measurable} events 
$\evs \subseteq \pow \Omega$ where $\pow$ is the power set. 
Any event $A \in \evs$ may have probabilities assigned to it. 
Probability theory, more precisely, measure theory 
(see \cite{hur:03}), requires that the the following conditions hold for 
the probability space $\evs$:
\begin{itemize}
\item $\evs$ contains the empty event $\varnothing$ and the trivial event $\Omega$;
\item $\evs$ must be closed under union: $A, B \in evs \ttfun A \cup B \in evs$;
\item $\evs$ must be closed under complement: $A \in evs \ttfun \Omega \setminus A \in evs$.
\end{itemize}
The closure for the other Boolean operators intersection and set difference is implied
by the above conditions.

\begin{definition}[Probability Distribution]\label{def:dist}
A {\it probability distribution} $P$ over $(\Omega, \evs)$ is a function
from events in $\evs$ to real numbers satisfying the following conditions.
\begin{enumerate}
\item $\forall A \in \evs.\  P(A) \geq 0$. 
\item $P(\Omega) = 1$.
\item If $A, B \in \evs$ and $A \cap B = \varnothing$ then 
$P(A \cup B) = P(A) + P(B)$.
\end{enumerate}
\end{definition}
In Joe Hurd's dissertation \cite{hur:03} these conditions are referred to as 
\begin{itemize}
\item[(1)] {\it Positivity}, 
\item[(2)] {\it Probability space} ((2.27), page 33 \cite{hur:03}), and 
\item[(3)] {\it Additivity} 
\end{itemize}
in the general context of Measure spaces. 
The property of Monotonocity and Countable Additivity \cite{hur:03}
are not present in the introduction of Koller and Friedman but at least 
Countable Additivity can be considered as implicit since we are looking 
at finite spaces only.

The above definition can be directly translated into an Isabelle specification\footnote{Even
though measure theory \`a la Hurd is provided in the Isabelle theory library, we prefer
to provide a simpler ad hoc definition here for completeness -- integration is possible
and planned for later stages.}.
We transform the textbook definition into a definition and a 
type definition: we define first event spaces over finite types of outcomes and 
then we give a type definition for probability distribution. 

The possible outcomes can be provided as a type represented here by a type variable $\Omega$.
This type is assumed to be finite implicitly by coercing the type variable $\Omega$ into the
type class \texttt{finite} using the type judgment with \texttt{::} in the following definition
of probability space.
\begin{ttbox}
{\bf definition} prob_space :: ((\ttOmega :: finite) set) set) \ttfun bool
{\bf where} prob_space \ttevs = \{\} \ttin \ttevs \ttand (UNIV :: \ttOmega set) \ttin \ttevs \ttand
                       \ttforall A, B \ttin \ttevs. A \ttcup B \ttin \ttevs \ttand
                       \ttforall A \ttin \ttevs. (UNIV :: \ttOmega set) - A \ttin \ttevs 
\end{ttbox}
In the above type definition, the $\Omega$ is an Isabelle type variable. The 
polymorphic constructor \texttt{UNIV} is a standard constructor in Isabelle and 
represents the set of all elements of a type, here all outcomes in $\Omega$.
We can now show that the powerset over a finite type is a probability space. 
\begin{ttbox}
{\bf theorem} Pow_prob_space: "\ttforall (A :: (\ttOmega::finite) set). prob_space (Pow A)"
\end{ttbox}
A probability distribution is a function over a probability space. We use a type definition
for it.
\begin{ttbox}
{\bf typedef} (\ttOmega :: finite) prob_dist = \{p :: (\ttOmega set \ttfun real). 
                        \ttforall (A :: \ttOmega set). p A \ttgeq 0 \ttand p(UNIV :: \ttOmega set) = 1 \ttand 
                        \ttforall (A :: \ttOmega set) B . A \ttcap B = \ttvarnothing \ttimp p(A \ttcup B) = p(A) + p(B) \}
\end{ttbox}
In the above type definition for probability distribution, we can see that the 
three criteria from Definition \ref{def:dist} are almost one to one translated into Isabelle.
Type definitions are applied by imposing them on new constants or variables which
automatically leads to the invocation of the defining properties on these elements: 
either by assuming them for constants defined over the new types or by 
creating new proof obligations when existing terms are judged to be of these types.
We apply this when we define a probability distribution over the power set of
a finite type of outcomes for QKD in Section \ref{sec:sec}.

Hurd already writes ``Measure theory defines what probability spaces are but does little
to help us find concrete distributions''\cite{hur:03}. He then uses Caratheodory's extension
theorem to help out. For the simple case of finite
sets of outcomes that we consider here, we introduce a canonical construction that 
uses the powerset of outcomes as the event space and accordingly constructs
the probability distribution by summing up the probabilities for the 
individual outcomes of any subset of $\Omega$, i.e. an event $\in \evs$, which is possible
since they are finite sets and the outcomes are all distinct.
For the definition of a generic operator for this canonical construction, we use
the \texttt{fold} operator available in Isabelle for defining simple recursive functions
over finite sets. Intuitively, \texttt{fold} operates like this:
\[ \text{\texttt{fold}}\ f\ z\ \{x_1, \dots, x_n\} = f(x_1 \dots (f\ x_n\ z))\,. \]

We define the canonical construction for probability distributions as a 
function \texttt{pmap} lifting a probability assignment \texttt{ops} for single
outcomes $\ttin \Omega$ to any set \texttt{S}.
\begin{ttbox}
pmap (ops :: \ttOmega \ttfun real) S = fold (\ttlam x y. ops x + y) 0 S
\end{ttbox}
Now, we can show that for any finite type $\Omega :: \texttt{finite}$ with a 
probability assignment \texttt{ops} the canonical construction yields a
probability distribution over the power set by showing that it is contained
in the defining set of the type \texttt{prob\_dist} given by 
the domain of the internal type injection \texttt{Rep\_prob\_dist}.
\begin{ttbox}
{\bf theorem} pmap_ops:
  pmap ops \ttin dom Rep_prob_dist \ttand \ttforall x :: \ttOmega. pmap ops \{x\} = ops x
\end{ttbox}

Conditional probability, for example, $P(A | B)$ signifies the probability for an 
event $A$ given an event $B$. It can be defined simply as follows.
\begin{definition}[Conditional Probability]\label{def:condprob}
For an event space $\evs$ and two events $A, B \in \evs$ the conditional probability of $A$ given
$B$ is defined for a probability distribution $P$ as
\[P(A | B) \equiv P(A \cap B) / P(B)\,.\] 
\end{definition}
The corresponding Isabelle definition uses some syntactic sugaring to hide the fact that 
the mathematical definition above is somewhat sloppy in its types.
\begin{ttbox}
{\bf definition} cond_prob ::  (\ttOmega :: finite)prob_dist \ttfun \ttOmega set \ttfun \ttOmega set \ttfun real "_(_|_)" 
{\bf where}   P(A|B) \ttequiv (Rep_prob_dist P (A \ttcap B)) / (Rep_prob_dist P B)
\end{ttbox}
The above Isabelle definition uses the mixfix syntax after the type in quotation marks to 
allow writing the same probability distribution as a function with two arguments by a syntactic
translation into the corresponding definition with intersections of the event sets. 

\section{Protocol Tree with Probabilities}

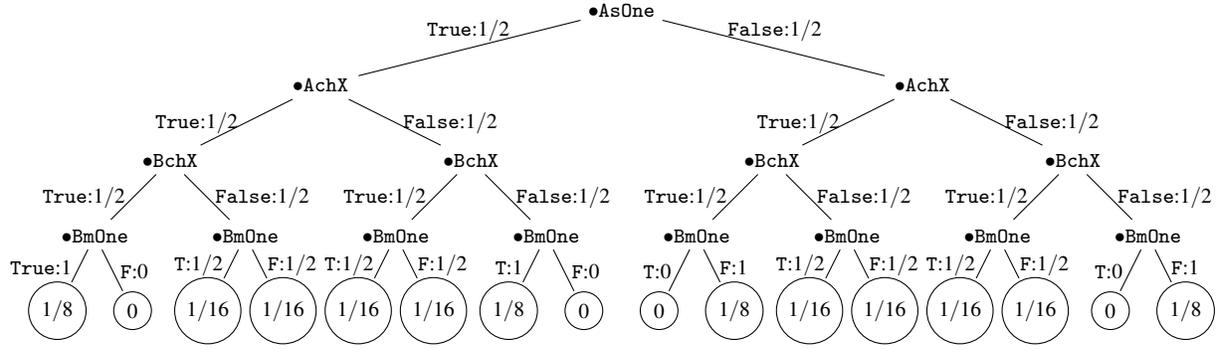
\begin{figure*}
\begin{center}
\begin{tikzpicture}[>=stealth',shorten >=1pt,auto,scale=0.5]
\scriptsize
	\node  (r) at (15,8) {$\bullet$\texttt{AsOne}};
        \node  (o1) at (7,6) {$\bullet$\texttt{AchX}};
        \node  (o2) at (23,6) {$\bullet$\texttt{AchX}};
        \node  (t1) at (3,4) {$\bullet$\texttt{BchX}};
        \node  (t2) at (11,4) {$\bullet$\texttt{BchX}};
        \node  (t3) at (19,4) {$\bullet$\texttt{BchX}};
        \node  (t4) at (27,4) {$\bullet$\texttt{BchX}};
        \node  (r1) at (1,2)  {$\bullet$\texttt{BmOne}};
        \node  (r2) at (5,2)  {$\bullet$\texttt{BmOne}};
        \node  (r3) at (9,2)  {$\bullet$\texttt{BmOne}};
        \node  (r4) at (13,2)  {$\bullet$\texttt{BmOne}};
        \node  (r5) at (17,2)  {$\bullet$\texttt{BmOne}};
        \node  (r6) at (21,2)  {$\bullet$\texttt{BmOne}};
        \node  (r7) at (25,2)  {$\bullet$\texttt{BmOne}};
        \node  (r8) at (29,2)  {$\bullet$\texttt{BmOne}};
        \node  (f1) at (0,0)[circle,draw]   {$1/8$};
        \node  (f2) at (2,0)[circle,draw]   {$0$};
        \node  (f3) at (4,0)[circle,draw]   {$1/16$};
        \node  (f4) at (6,0)[circle,draw]   {$1/16$};
        \node  (f5) at (8,0)[circle,draw]   {$1/16$};
        \node  (f6) at (10,0)[circle,draw]   {$1/16$};
        \node  (f7) at (12,0)[circle,draw]   {$1/8$};
        \node  (f8) at (14,0)[circle,draw]   {$0$};
        \node  (f9) at (16,0)[circle,draw]   {$0$};
        \node  (f10) at (18,0)[circle,draw]   {$1/8$};
        \node  (f11) at (20,0)[circle,draw]   {$1/16$};
        \node  (f12) at (22,0)[circle,draw]   {$1/16$};
        \node  (f13) at (24,0)[circle,draw]   {$1/16$};
        \node  (f14) at (26,0)[circle,draw]   {$1/16$};
        \node  (f15) at (28,0)[circle,draw]   {$0$};
        \node  (f16) at (30,0)[circle,draw]   {$1/8$};

        \draw (r) to node[above]{\texttt{True}:$1/2$} (o1);
        \draw (r) to node[above]{\texttt{False}:$1/2$} (o2);
        \draw (o1) to node[left]{\texttt{True}:$1/2$} (t1);
        \draw (o1) to node[right]{\texttt{False}:$1/2$} (t2);
        \draw (o2) to node[left]{\texttt{True}:$1/2$} (t3);
        \draw (o2) to node[right]{\texttt{False}:$1/2$} (t4);
        \draw (t1) to node[left]{\texttt{True}:$1/2$} (r1);
        \draw (t1) to node[right]{\texttt{False}:$1/2$} (r2);
        \draw (t2) to node[left]{\texttt{True}:$1/2$} (r3);
        \draw (t2) to node[right]{\texttt{False}:$1/2$} (r4);
        \draw (t3) to node[left]{\texttt{True}:$1/2$} (r5);
        \draw (t3) to node[right]{\texttt{False}:$1/2$} (r6);
        \draw (t4) to node[left]{\texttt{True}:$1/2$} (r7);
        \draw (t4) to node[right]{\texttt{False}:$1/2$} (r8);
        \draw (r1) to node[left]{\texttt{True}:$1$} (f1);
        \draw (r1) to node[right]{\texttt{F}:$0$} (f2);
        \draw (r2) to node[left]{\texttt{T}:$1/2$} (f3);
        \draw (r2) to node[right]{\texttt{F}:$1/2$} (f4);
        \draw (r3) to node[left]{\texttt{T}:$1/2$} (f5);
        \draw (r3) to node[right]{\texttt{F}:$1/2$} (f6);
        \draw (r4) to node[left]{\texttt{T}:$1$} (f7);
        \draw (r4) to node[right]{\texttt{F}:$0$} (f8);
        \draw (r5) to node[left]{\texttt{T}:$0$} (f9);
        \draw (r5) to node[right]{\texttt{F}:$1$} (f10);
        \draw (r6) to node[left]{\texttt{T}:$1/2$} (f11);
        \draw (r6) to node[right]{\texttt{F}:$1/2$} (f12);
        \draw (r7) to node[left]{\texttt{T}:$1/2$} (f13);
        \draw (r7) to node[right]{\texttt{F}:$1/2$} (f14);
        \draw (r8) to node[left]{\texttt{T}:$0$} (f15);
        \draw (r8) to node[right]{\texttt{F}:$1$} (f16);
\end{tikzpicture}
\end{center}\label{fig:tree}
\caption{Decision tree for QKD: probabilities for each step at edges and 
outcome probabilities as leaves.}
\end{figure*}
The tree depicted in Figure \ref{fig:tree} shows the probabilities along the
paths from root to leaves according to the steps of the protocol.

How does a concrete application, like QKD, relate to the probabilities and distributions
encountered previously?
The outcomes we consider are Boolean vectors of length four representing each one 
a possible path of the protocol. The following type of \texttt{QKD\_om} will
instantiate the polymorphic type $\Omega$ in the previous definitions.
\begin{ttbox}
type_synonym QKD_om = bool * bool * bool * bool
\end{ttbox}
To make the outcomes more readable we introduce the following abbreviations
as local definition of the locale QKD \cite{kam:00}. Their intended meaning is
\texttt{AsOne}: ``A sends 1'', \texttt{AchX/BchX}: ``A/B chooses diagonal scheme''
and \texttt{BmOne}'' ``B measures 1''.
\begin{ttbox}
{\bf locale} QKD = 
{\bf defines}
AsOne out = fst(out)
AchX out = fst(snd out)
BchX out = fst(snd(snd out))
BmOne out = snd(snd(snd out))
\end{ttbox}
The basic distribution on these 16 outcomes derived from Figure \ref{fig:tree}
is given in the table in Figure \ref{fig:table}.
\begin{figure}
\begin{center}
\begin{tabular}{|l|l|l|l|l|}
\hline
AsOne & AchX & BchX & BmOne & P \\
\hline
False & False & False & False & 1/8 \\
False & False & False & True & 0 \\
False & False & True & False & 1/16 \\
False & False & True & True & 1/16 \\
False & True  & False & False & 1/16 \\
False & True & False & True & 1/16 \\
False & True & True & False & 1/8 \\
False & True & True & True & 0 \\
True & False & False & False & 0 \\
True & False & False & True & 1/8 \\
True & False & True & False & 1/16 \\
True & False & True & True & 1/16 \\
True & True & False & False & 1/16 \\
True & True & False & True & 1/16 \\
True & True & True & False & 0 \\
True & True & True & True & 1/8 \\
\hline
\end{tabular}
\caption{Probability assignment for QKD outcomes}\label{fig:table}
\end{center}
\end{figure}
This basic probability distribution can be input as the element \texttt{ops} 
in the above function \texttt{pmap} producing automatically the canonical probability 
distribution for the QKD protocol.
To define the basic function \texttt{ops}, we use a locale definition \cite{kam:00}. 
We omit the details of the cases as they are clear from the table in Figure \ref{fig:table}
(see Appendix).
\begin{ttbox}
{\bf defines} (qkd_ops  :: QKD_om \ttfun real) = 
      \ttlam x. case x of 
            (False, False, False, False) \ttfun 1/8 
          | ...
\end{ttbox}
We can then define a probability distribution being able to show that it is in fact
one by Theorem \texttt{pmap\_ops}.
\begin{ttbox}
{\bf defines} (qkd_prob_dist  :: prob_dist) = pmap qkd_ops
\end{ttbox}
Based on this probability distribution we can calculate interesting probabilities
telling us something about the security of the protocol.

In order to do that we first consider another useful probability law: the law of total
probability.

\section{Law of Total Probability}
For the Security Argument, we need the law of total probability.
\begin{theorem}[Law of total probability]
Let $A_j, j \leq n$ for some $n \in \mathbb{N}$ be a set of events 
partitioning the event space $\evs$, that is, $\forall\ i,j \leq n.\ i \neq j \ttfun A_i \cap A_j = \varnothing$ and $\bigcup_j A_j = \Omega$. Let further $B \in evs$. We then have that
\[P(B) = \sum_{j} P(B | A_j) P(A_j)\,.\]
\end{theorem}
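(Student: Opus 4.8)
The plan is to reduce the statement to the finite additivity axiom from Definition~\ref{def:dist} by unfolding the definition of conditional probability. First I would rewrite each summand using Definition~\ref{def:condprob}: since $P(B \mid A_j) \equiv P(B \cap A_j) / P(A_j)$, the product $P(B \mid A_j)\,P(A_j)$ simplifies to $P(B \cap A_j)$ whenever $P(A_j) \neq 0$, so the right-hand side becomes $\sum_j P(B \cap A_j)$. I expect the main obstacle to be the edge case $P(A_j) = 0$, where the quotient in the conditional probability is division by zero; I would handle this by observing that $A_j \subseteq \Omega$ forces $P(A_j) \geq 0$, and that on the parts where $P(A_j)=0$ the intended contribution $P(B \cap A_j)$ is also $0$ by monotonicity (itself derivable from positivity and additivity applied to the disjoint decomposition $A_j = (B \cap A_j) \disj (A_j \setminus B)$), so the summand vanishes regardless of how the division is resolved.

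Having reduced the goal to $P(B) = \sum_j P(B \cap A_j)$, the core of the argument is a finite additivity computation. The key step is to note that the sets $B \cap A_j$ are pairwise disjoint: from $A_i \cap A_j = \varnothing$ for $i \neq j$ we immediately get $(B \cap A_i) \cap (B \cap A_j) = B \cap (A_i \cap A_j) = \varnothing$. Moreover their union telescopes to $B$, since $\bigcup_j (B \cap A_j) = B \cap \bigcup_j A_j = B \cap \Omega = B$, using that $B \in \evs$ means $B \subseteq \Omega$. With disjointness and the correct union established, I would apply the third clause of Definition~\ref{def:dist} repeatedly.

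Concretely, I would proceed by induction on $n$, the size of the partition. The base case is immediate, and in the inductive step I would split off the last block, writing $B = (B \cap A_n) \disj \bigl(B \cap \bigcup_{j<n} A_j\bigr)$, verify that these two pieces are disjoint, invoke the additivity axiom once to get $P(B) = P(B \cap A_n) + P\bigl(B \cap \bigcup_{j<n} A_j\bigr)$, and then apply the induction hypothesis to the second term. In the Isabelle formalisation this finite summation over a partition is exactly the pattern the \texttt{fold}-based construction of \texttt{pmap} is designed to support, so the proof obligation should decompose cleanly once the disjointness side-conditions are discharged. The only genuinely delicate bookkeeping, as noted above, is justifying that the division-by-zero cases contribute nothing; everything else is a routine consequence of the three defining properties of a probability distribution.
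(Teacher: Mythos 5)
Your proposal is correct and takes essentially the same route as the paper's proof: show that the sets $B \cap A_j$ are pairwise disjoint, write $B = B \cap \Omega = \bigcup_j (B \cap A_j)$, apply the additivity clause of Definition~\ref{def:dist}, and identify each $P(B \cap A_j)$ with $P(B \mid A_j)\,P(A_j)$ via Definition~\ref{def:condprob}. Your additional care with the $P(A_j) = 0$ division-by-zero case (via monotonicity) and the explicit induction realising $n$-fold additivity only add rigour at points the paper's proof passes over silently.
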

\begin{proof}
Since we have a partition, that is, $A_i \cap A_j = \varnothing$ for all $i,j \leq n$ with 
$i \neq j$, we have also 
\[
\begin{array}{ccccccccr}
(B \cap A_i) \cap (B \cap A_j) & =  & B \cap (A_i \cap A_j) & =  & B \cap \varnothing & = & \varnothing & \qquad\qquad (1)
\end{array}
\]
Therefore
\[
\begin{array}{clll}
P(B) & = & P(B \cap \Omega) & {(A_j\ \text{is partition of } \Omega)}\\
& = & P(B \cap (A_1 \cup \dots \cup A_n)) & (\text{set algebra})\\
& = & P((B \cap A_1) \cup \dots \cup (B \cap A_n)) & ((1) \text{ and Definition \ref{def:dist} (3)})\\
& = & P(B \cap A_1) + \dots + P(B \cap A_n) & (\text{summation})\\
& = & \sum_{j} P(B \cap A_j) & (\text{Definition of conditional probability}) \\
& = & \sum_{j} P(B | A_j)* P(A_j) &  \\
\end{array}
\]
\end{proof}

\section{Security Argument}
\label{sec:sec}
The first argument computes the probability that B measures 1 applying the law of 
total probability.
The partition ${\mathcal A}$ of $\Omega$ used in the derivation is given as the following family 
of disjoint sets with $\bigcup {\mathcal A} = \Omega$.
\[
\begin{array}{lll}
{\mathcal A} & = & \{ \{ s :: \texttt{QKD\_om}.\ \texttt{BchX}\ s \wedge \texttt{AchX}\ s \wedge \texttt{AsOne}\ s \},\\
& & \ \{ s :: \texttt{QKD\_om}.\ \neg(\texttt{BchX}\ s) \wedge \texttt{AchX}\ s \wedge \texttt{AsOne}\ s \},\\
& & \ \{ s :: \texttt{QKD\_om}.\ \texttt{BchX}\ s \wedge \neg(\texttt{AchX}\ s) \wedge \texttt{AsOne}\ s \}, \\
& & \ \{ s :: \texttt{QKD\_om}.\ \neg(\texttt{BchX})\ s \wedge \neg(\texttt{AchX}\ s) \wedge \texttt{AsOne}\ s \}, \\
& & \ \{ s :: \texttt{QKD\_om}.\ \texttt{BchX}\ s \wedge \texttt{AchX}\ s \wedge \neg(\texttt{AsOne})\ s \},\\
& & \ \{ s :: \texttt{QKD\_om}.\ \neg(\texttt{BchX}\ s) \wedge \texttt{AchX}\ s \wedge \neg(\texttt{AsOne})\ s \},\\
& & \ \{ s :: \texttt{QKD\_om}.\ \texttt{BchX}\ s \wedge \neg(\texttt{AchX}\ s) \wedge \neg(\texttt{AsOne})\ s \}, \\
& & \ \{ s :: \texttt{QKD\_om}.\ \neg(\texttt{BchX})\ s \wedge \neg(\texttt{AchX}\ s) \wedge \neg(\texttt{AsOne})\ s \} \\
& & \}
\end{array}
\]
For each $A_j \in {\mathcal A}$, we have $P(A_j) = 1/8$:
since $P$ is a probability distribution, we can use the third defining property
to sum up the disjoint probabilities for each outcome. The outcome probabilities in 
Figure \ref{fig:table} give for example (similar for the other $A_j$):
\[ 
\begin{array}{ll}
P(\{ s :: \texttt{QKD\_om}.\ \texttt{BchX}\ s \wedge \texttt{AchX}\ s \wedge \texttt{AsOne}\ s \}) & = \\
    P(\texttt{True}, \texttt{True},\texttt{True}, \texttt{True}) +
    P(\texttt{True}, \texttt{True},\texttt{True}, \texttt{False}) & = \\
    1/8 + 0 = 1/8 &  \\
\end{array}
\]
With this we can compute that $P(\texttt{BmOne}) = 1/2$.\footnote{This calculation can
be much simplifed if we apply the second to last step of total probability instead but
the additional steps are instructive.} 
\[
\begin{array}{clll}
P(\texttt{BmOne}) & = & \sum_{A_j \in \mathcal{A}} P(\texttt{BmOne} | A_j) * P(A_j) &  (\text{Law of total probability})\\
 & = & 1/8 * \sum_{A_j \in \mathcal{A}} P(\texttt{BmOne} | A_j) & (P(A_j) = 1/8)\\
 & = & 1/8 * \sum_{A_j \in \mathcal{A}} P(\texttt{BmOne} \cap A_j) / P(A_j) & (\texttt{Definition } \ref{def:condprob})\\
 & = & \sum_{A_j \in \mathcal{A}} P(\texttt{BmOne} \cap A_j) & (P(A_j) = 1/8)\\
 & = & 1/2 & (\text{sum even columns in Table } \ref{fig:table})
\end{array}
\]

This probability cannot be interpreted as a security statement directly. 
It rather says that on the whole Bob receives 1s with 50\% probability 
but not how this relates to what A has actually sent.
However the above probability $P(\texttt{BmOne})$ is useful to calculate the
conditional probability $P(\texttt{AsOne} | \texttt{BmOne})$: how likely is it
that A has actually sent a 1 given that B received a 1?
\[
\begin{array}{clll}
P(\texttt{AsOne} | \texttt{BmOne}) & = & P(\texttt{AsOne} \cap \texttt{BmOne}) / P(\texttt{BmOne})
                                    &  (\text{Conditional Probability})\\
& = & 2*(P(\{ x:: \ttOmega.\ \texttt{AsOne } x \wedge \texttt{BmOne } x \})) 
                                    & (\text{above calculation}) \\
& = & 2*(P(\{\{\texttt{T,F,F,T}\},\{\texttt{T,F,T,T}\}, & \\
&   & \qquad\quad \{\texttt{T,T,F,T}\}, \{\texttt{T,T,T,T}\}\}))    & (\text{Definition \ref{def:dist}}) \\
& = & 2*(P(\{\{\texttt{T,F,F,T}\}) + P(\{\texttt{T,F,T,T}\}) + & \\
&   & \qquad\quad P(\{\texttt{T,T,F,T}\}) + P(\{\texttt{T,T,T,T}\}))   & (\text{Table \ref{fig:table}}) \\
& = & 2*(1/8 + 1/16 + 1/16 + 1/8) & (\texttt{arithmetic})\\
& = & 3/4  \\
\end{array}
\]
This shows that there is 25\% chance of error for Bob and Eve to receive 
the wrong bit. 

\section{Conclusion}
We have provided a simple probability model for Isabelle based on finite outcome
types formalising basic Bayesian probability notions, proving general theorems
and illustrating their use on the application of QKD. It is interesting to observe
that the mathematical model for the quantum computations realising the protocol behaviour
seems not present: it is embedded in the a priori probabilities of the basic
outcomes in Table \ref{fig:table}. This is a nice (and important) observation since the present
formalisation shows that the actual quantum model and the probability model can 
be treated in a modular manner.

The security argument presented so far is actually rather poor: with 75\%
probability the attacker Eve and Bob can assume that they have got the right bit. 
This is not enough protection for security.
The security relies on one further observation and an additional protocol step.
The observation is that if Eve resends the bit she has received, Bob -- when
measuring afterwards -- will have erroneous measurements even in the case that should
give the right results with 100\% probability. This error can be revealed in the 
additional protocol step in which Alice and Bob compare (first the used polarisation 
schemes and then based on that) some portion of the bits they transmitted coincidentally 
with equal schemes.

To model and analyse the probabilities for the intercept and resend attack, the 
finite probability model presented her is sufficient: we just need to extend the 
outcome type and add the corresponding a priori probabilities to an extended 
Table \ref{fig:table}.
This will allow to calculate the final error probability when Eve intercepts and resends.
However, for the final security argument that proves the ``unconditional security'' of 
QKD (assuming the clear line isn't intercepted) the probability model over finite
outcome types needs to be extended to outcomes that are infinite sequences. This will then
necessitate a model similar to the one in Hurd's thesis \cite{hur:03}.


\bibliographystyle{abbrv}
\bibliography{biblio}

\appendix
\section{Isabelle Code}
under construction
\end{document}